\newtheorem{proposition}{Proposition}
\newtheorem{definition}{Definition}
\newtheorem{example}{Example}
\title{Time Delays in Membrane Systems and Petri Nets}
\author{Bogdan Aman
\institute{A.I.Cuza University\\
Blvd. Carol I, no.11, 700506 Ia\c si, Romania}
\email{bogdan.aman@gmail.com}
\and Gabriel Ciobanu
\institute{Institute of Computer Science, Romanian Academy\\
and ''A.I.~Cuza'' University of Ia\c si, Romania}
\email{gabriel@info.uaic.ro} }
\begin{document}
\maketitle

\begin{abstract}

Timing aspects in formalisms with explicit resources and parallelism are
investigated, and it is presented a formal link between timed membrane systems
and timed Petri nets with localities. For both formalisms, timing does not
increase the expressive power; however both timed membrane systems and timed
Petri nets are more flexible in describing molecular phenomena where time is a
critical resource. We establish a link between timed membrane systems and timed
Petri nets with localities, and prove an operational correspondence between
them.

\end{abstract}

\section{Introduction}
\label{section:introduction}

The evolution of complex real systems frequently involves various
interactions among components. Some mathematical models of such systems
combine both discrete and continuous evolutions on multiple time scales
with many orders of magnitude. For example, the molecular operations of a
living cell can be thought of as such a dynamical system. The molecular
operations happen on time scales ranging from $10^{-15}$ to $10^4$ seconds,
and proceed in ways which are dependent on populations of molecules ranging
in size from as few as approximately $10$ to approximately as many as
$10^{20}$. Molecular biologists have used formalisms developed in computer
science (e.g. hybrid Petri nets) to get simplified models of some molecular
phenomena like transcription and gene regulation processes. According to
molecular cell biology \cite{Lodish08}:
(i) ``the life span of intracellular proteins varies from as
short as a few minutes for mitotic cycles, which help regulate
passage through mitosis, to as long as the age of an organism for
proteins in the lens of the eye'', and
(ii) ``Most cells in multicellular organisms $\dots$ carry
out a specific set of functions over periods of days to months or
even the lifetime of the organism (nerve cells, for example)''.
Lifetimes play an important role in the
biological evolution; we mention an example from the immune system.
\begin{example}
According to \cite{Lodish08}, T-cell precursors arriving in the thymus from
the bone marrow spend up to a week differentiating there before they enter
a phase of intense proliferation. In a young adult mouse the thymus
contains around $10^8$ to $2 \times 10^8$ thymocytes. About $5 \times 10^7$
new cells are generated each day; however, only about $10^6$ to $2 \times
10^6$ (roughly $2-4\%$) of these will leave the thymus each day as mature T
cells. Despite the disparity between the numbers of T cells generated daily
in the thymus and the number leaving, the thymus does not continue to grow
in size or cell number. This is because approximately $98\%$ of the
thymocytes which develop in the thymus also die within the thymus.
\end{example}

Among the formalisms able to model these systems by using explicit
resources, parallelism and timing, we refer to membrane systems
\cite{Paun02} and Petri nets \cite{Jensen92,Peterson81}. Membrane
systems were extended with timing aspects in
\cite{Cavaliere05,Cavaliere10}. Petri Nets have two main extensions
with time: Time Petri Nets \cite{Merlin74} (a transition can fire
within a time interval) and Timed Petri Nets \cite{Ramchandani74} (a
transition fires as soon as possible). In Petri nets, time
can be considered relative both to places and transitions
\cite{Pezze99,Sifakis80}. In this paper, we define a timed extension
(relative to transitions) for Petri nets with localities, and we
establish a link between timed membrane systems and timed Petri nets
with localities.

Some connections between membrane systems and Petri nets are presented for the
first time in \cite{Zilio04,Qi04}. A direct structural relationship between
these two formalisms is established in \cite{Kleijn10,Kleijn06} by defining a
new class of Petri nets called Petri nets with localities. Localities are used
to model the regions of membrane systems. This new class of Petri
nets has been used to show how maximal evolutions from membrane systems are
faithfully reflected in the maximally concurrent step sequence semantics of
their corresponding Petri nets with localities.

Despite the fact that various timed extensions exist for both
membrane systems and Petri nets, we are not aware of any connection
between these timed extensions. Thus, we relate timed membrane
systems with timed Petri nets with localities. The existing links
(marked by citation or easy to prove) between timed membrane systems
and timed Petri nets are described in the following diagram.

\medskip

\begin{center}
\begin{tikzpicture}[scale=1.4]
\node at (0.0,0.0) {\large Membrane Systems \cite{Paun02}};

\node at (7.0,0.0) {\large Petri Nets with Localities \cite{Kleijn06}};

\draw[->] (2.0,0.0) -- (4.7,0.0);

\node at (3.5,0.2) {\large \cite{Kleijn06}};

\draw[->] (0.5,-0.2) -- (0.5,-1.5);

\draw[->] (7.5,-0.2) -- (7.5,-1.5);

\node at (0.0,-1.7) {\large timed Membrane Systems
\cite{Cavaliere05}};

\node at (7.0,-1.7) {\large timed Petri Nets with Localities};

\end{tikzpicture}
\end{center}

\medskip

\noindent Surprisingly, we prove that adding timing aspects does
not lead to more powerful formalisms, and the new links are expressed
by the following diagram.

\medskip

\begin{center}
\begin{tikzpicture}[scale=1.4]
\node at (0.0,0.0) {\large Membrane Systems \cite{Paun02}};

\node at (7.0,0.0) {\large Petri Nets with Localities \cite{Kleijn06}};

\draw[->] (2.0,0.0) -- (4.7,0.0);

\node at (3.5,0.2) {\large \cite{Kleijn06}};

\draw[->] (0.5,-0.2) -- (0.5,-1.5);

\draw[<-,double,blue] (-0.5,-0.2) -- (-0.5,-1.5);

\node at (-1.2,-0.8) {\large Prop. \ref{tPtoP}};

\draw[->] (7.5,-0.2) -- (7.5,-1.5);

\draw[<-,double,blue] (6.5,-0.2) -- (6.5,-1.5);

\node at (5.8,-0.8) {\large Prop. \ref{tPNtoPN}};

\node at (0.0,-1.7) {\large timed Membrane Systems
\cite{Cavaliere05}};

\node at (7.0,-1.7) {\large timed Petri Nets with Localities};

\draw[->,double,blue] (2.0,-1.7) -- (4.7,-1.7);

\node at (3.5,-1.9) {\large Prop. \ref{proposition:corresp}};

\end{tikzpicture}
\end{center}

\medskip

We prove that timing does not increase the expressive power of both membrane
systems and Petri nets with localities. However the timed formalisms are able
to describe more naturally some real systems involving timing. Although there
are few extensions with time for both membrane systems and Petri nets, it does
not exist a connection between these timed extensions. An attempt is presented
in \cite{Profir05} by using a software simulation (and having some decidability
aims). We relate timed membrane systems to timed Petri nets with localities
following the research line of \cite{Kleijn06}, and prove an operational
correspondence between them.

\section{Timed Membrane Systems}
\label{section:timed_symanti}

Membrane systems (also called P systems) are introduced by P\u aun as a
model of distributed, parallel and nondeterministic systems inspired by
cell biology \cite{Paun02}. A cell is divided in various compartments, each
compartment with a different task, with all of them working simultaneously
to accomplish a more general task for the whole system. The membranes
determine regions where objects and evolution rules can be placed. The
objects evolve according to the rules associated with each region, and the
regions cooperate in order to maintain the proper behaviour of the whole
system. The application of evolution rules is done in parallel, and is
eventually regulated by priority relationships between rules. Several
results and variants of membrane systems (inspired by different aspects of
living cells like symport and antiport communication through membranes,
catalytic objects, membrane charge, etc.) are presented in~\cite{Paun02}.
Various applications of membrane systems are presented in \cite{Ciobanu06}.
Links between membrane systems and process calculi are presented in
\cite{Ciobanu10}. An updated bibliography can be found on the membrane
systems webpage {\sf http://ppage.psystems.eu}.

The structure of a membrane system is represented by a tree (with the skin as
its root), or equivalently, by a string of correctly matching parentheses where
each pair of matching parentheses corresponds to a membrane. Graphically, a
membrane structure is represented by a Venn diagram in which two sets can be
either disjoint, or one is the subset of the other. A membrane without any other
membrane inside is said to be elementary. The membranes are labelled in a
one-to-one manner.

Let $\mathds{N}$ be the set of positive integers, and $V$ a finite alphabet of
symbols. A multiset over $V$ is a mapping $u: V \to \mathds{N}$. We use the
string representation of multisets that is widely accepted and used in membrane
systems; a multiset $w$ described by $a^2b^5$ means that $a$ appears twice in
$w$, while $b$ appears five times in $w$. We use a global clock to simulate the
passage of time. The following definition of timed membrane systems is similar
to that introduced in \cite{Cavaliere05}, but without considering catalysts,
signal-promoters and output region.

\begin{definition}
\label{definition:timed_symanti} A {\rm timed membrane system}
$\Pi=(V,\mu,w_1,\ldots,w_n,R_1,\ldots,R_n,e)$ is defined by
\begin{itemize}
\item[$\bullet$] $V$ is an alphabet (its elements are called
{\rm objects});

\item[$\bullet$] $\mu$ describes the {\rm membrane
structure}, namely a structure consisting of a hierarchy of $n$
membranes labelled from $1$ to $n$ which are either disjoint or
included; we distinguish the external membrane, usually called
``skin'';

\item[$\bullet$] $w_1,\ldots,w_n$ are finite multisets over $V$; $w_i$
represents the multiset of objects associated to membrane $i$; $n
\geq 1$ is the initial degree of the system;

\item[$\bullet$] $R_1,\ldots,R_n$ are finite sets of
evolution rules over $V$ associated with the membranes of $\mu$; the
rules are of the form $a \rightarrow v$, where $a \in V$ and $v$ is
a multiset from $\{(a,here),(a,out)\mid a\in V\} \cup \{(a,in_j)
\mid a\in V, 1 \leq j \leq n\}$;

\item $e: R_1 \cup \ldots \cup R_n \rightarrow \mathds{N}$ is a
(computable) function indicating the execution time of each
evolution rule; the time evolves according to a global clock that
starts from 0 and splits time in equal intervals (units of time).
\end{itemize}
\end{definition}

The membrane structure and the multisets in $\Pi$ determine a
configuration of the system. We can pass from a configuration to
another one by using the evolution rules. The use of a rule
$u\rightarrow v$ in a region with a multiset $w$ means to subtract
the multiset identified by $u$ from $w$, and then add the multiset
represented by~$v$.
Since the right hand side $v$ of a rule consists only of messages,
an object introduced by a rule cannot evolve in the same step by
means of another rule. If a message appears in $v$ in the form
$(c,here)$, then it remains in the same region. If it appears as
$(c,in_j)$, then a copy of $c$ is introduced in the child membrane
with the label $j$; if a child membrane with the label $j$ does not
exist, then the rule cannot be applied. If it appears as $(c,out)$,
then a copy of the object $c$ is introduced in the parent (surrounding)
membrane. The system may contain rules which are never applicable,
and also rules which send objects out of the skin.

The evolution rules in a membrane are applied in a
maximal parallel manner, and all membranes evolves in parallel. At
each tick of the (global) clock, all the rules that can be applied
must be applied in a maximal parallel manner (this means that no
further rule could be applied at the same time unit). An evolution
rule $r$ started at the $j$-th tick of the clock ends its execution
at the $j+e(r)$-th tick, meaning that the newly created objects by
rule $r$ can be used starting from the $j+e(r)+1$-th tick of the
clock. When a rule starts, the objects from the left hand side of
the rule become unavailable for other rules.

\medskip

\begin{figure}[ht]
\begin{tabular}{c@{\hspace{6ex}}c}
\begin{tikzpicture}[scale=1.2]
\node at (1.0,3.0) {$R_2=\{r_3:a \rightarrow (b,out)(a,here)\}$};

\node at (1.0,2.5) {$\quad\cup\;\{r_4:b \rightarrow (b,out)\}$};

\node at (1.0,4.0) {$R_1= \{r_1:a \rightarrow (a,in_2)\}$};

\node at (1.0,3.5) {$\quad\cup\;\{r_2:b \rightarrow (a,in_2)\}$};

\draw[thick,rounded corners=4pt] (0.0,0.5) rectangle (2.0,1.5);

\node at (0.0,0.35) {$2$};

\node at (1.0,1.0) {$a^3\;b^5$};

\draw[thick,rounded corners=4pt] (-0.5,0.0) rectangle (2.5,2.0);

\node at (-0.5,-0.15) {$1$};

\node at (1.0,0.25) {$b^2\;a^4$};
\end{tikzpicture}&
\begin{minipage}{7.7cm}\vspace{-33ex}{As an example, we consider
a membrane system with two nested membranes (the inner membrane
labelled by $2$, the outer membrane labelled by $1$), two sets
$R_1$ and $R_2$ of evolution rules having the execution times $e(r_1)=2$,
$e(r_2)=5$, $e(r_3)=3$, $e(r_4)=1$, a global clock and two symbols
($a$ and $b$). Initially, membrane~$1$ contains the multiset
$b^2\;a^4$, and membrane~$2$ contains the multiset $a^3\;b^5$.}

\end{minipage}\\
\end{tabular}
\centering\vspace{-3ex}\caption{A Timed Membrane System}
\label{figure:example_membrane}
\end{figure}

In what follows we define the configurations of a membrane system,
and the transition system given by considering each of the
transition steps defined by maximally parallel rewriting and
parallel communication, as in \cite{CiobanuHandbook}. Let $V$ be a
finite alphabet of objects over which we consider the free
commutative monoid $V^*$ whose elements are multisets (the empty
multiset is denoted by $\varepsilon$). Objects together with a target indication
are enclosed in messages of form $(w, here)$, $(w, out)$, and
$(w, in_l )$. For the sake of simplicity, hereinafter we consider
that the messages with the same target indication merge into one
message:

\centerline{$\displaystyle \prod_{i\in I} (v_i,here)=(w,here)$,
$\displaystyle \prod_{i\in I} (v_i,in_l)=(w,in_l)$, $\displaystyle
\prod_{i\in I} (v_i,out)=(w,out)$,}

\noindent
with $w=\displaystyle \prod_{i\in I} v_i$, $I$ a non-empty
set, and $(v_i)_{i\in I}$ a family of multisets over $V$.

A configuration for a membrane system is a tuple
$C=(w_1,\ldots,w_n,k)$, namely the multisets of all regions together
with the value of the global clock. An intermediary configuration is
a tuple in which the objects have associated target indications.
Each membrane system has an initial configuration which is
characterized by the initial multiset of objects for each membrane
of the initial membrane structure of the system.
For two configurations $C$ and $C'$ of $\Pi$, we say that there is a
transition from $C$ to $C'$, and write $C\Rightarrow C'$, if the
following {\it steps} are executed in the given order:
\begin{enumerate}
\item {\em maximal parallel rewriting step}
($\stackrel{mpr}{\Longrightarrow}$):
each membrane evolves in a maximal parallel manner;
\item {\em parallel communication of objects through membranes}
($\stackrel{tar}{\Longrightarrow}$), by sending and receiving messages.
\end{enumerate}
The last step takes place only if there are messages resulting from
the first step. If the first step is not possible, then neither is
the second step, and we say that the system has reached a {\em halting
configuration}. According to \cite{Andrei07}, a transition step
between two configurations $C,C'$ is given by: $C \Rightarrow C'$
iff $C$ and $C'$ are related by the following relation: $C
\stackrel{mpr}{\Longrightarrow} \stackrel{tar}{\Longrightarrow}C'$.
Starting from a configuration without messages, we apply the ``mpr''
step and get an intermediate configuration; if we have messages,
then we apply the ``tar'' step. If the last configuration has no
messages, then we say that the transition relation $\Rightarrow$ is
well-defined as an evolution step between the first and last
configurations.

The evolution of the system $\Pi$ at time step $k$, from a
configuration $C=(w_1,\ldots,w_n,k)$ to another configuration
$C'=(w'_1,\ldots,w'_n,k+1)$ is made by applying
a multiset of rules $R$ in a maximally parallel manner.
If the multiset $R$ of rules is empty, then
only the clock is incremented (from $k$ to $k+1$).
Given a multiset of rules $R$,
we denote by $lhs_i=\sum_{r\in R} R(r) \cdot lhs^r_i$ the multiset
of objects in the left hand sides of the rules in $R$ which are
associated to membrane $i$. In a similar way, by
$rhs_{i,j}^k = \sum_{r\in R;\;e(r)=j} R(r)\cdot~rhs^{r,k}_{i,j}$ is
denoted the multiset of objects in the right hand sides of the rules
in $R$ applied at time $k$ which is associated to membrane $i$ after
$j$ units of time. We also denote by $m=max_{r\in R}e(r)$ the
maximum delay inferred by the rules of~$R$. $C$ evolves to $C'$ by a
multiset $R$ of rules (this is denoted by
$C\stackrel{R}{\Longrightarrow}C'$) if for each membrane $i$ the
following conditions~hold:
\begin{enumerate}
\item[$(i)$] $lhs_i \leq w_i$;
\item[$(ii)$] there is no rule $r\not \in R$ such that
$lhs_i^r+ lhs_i\leq w_i$;
\item[$(iii)$] for each $a\in V$, $w'_i(a)=w_i(a)-lhs_i(a)+
\sum^{k}_{s=max(0,k-m)} rhs^s_{i,0}(a)$.
\end{enumerate}
According to $(i)$, a configuration $C$ has in each membrane labelled by $i$
enough objects to enable the execution of the multiset $R$ of rules.
The maximal parallelism is captured by $(ii)$, saying that an extra
evolution rule cannot be added to $R$. Condition $(iii)$ describes
the effect of the rules application by adding all the objects having
$j=0$ created in the last $min(k,m)$ steps which are ready to be
used in the membrane system evolution. Before incrementing the
global clock, all multisets $rhs^s_{i,j}$ are transformed into
$rhs^{s}_{i,j-1}$ for $max(0,k-m) \leq s,j \leq k$.

\begin{proposition}\label{tPtoP}
For every timed membrane system $\Pi=(V,\mu,w_{0,1},\ldots,w_{0,n},
R_1, \ldots, R_n, e)$ there exists an untimed membrane system
$\Pi'=(V',\mu',w'_{0,1},\ldots,w'_{0,n},R'_1,\ldots,R'_n)$ that
simulates the evolution of $\Pi$ (restricted to the elements of
$V$). Formally, for all $a\in V$ and $k \in \mathds{N}$ we have
$w_{k,i}(a)=w'_{k,i}(a)$, where $w_{k,i}$ and $w'_{k,i}$ are the multisets of
objects from membrane $i$ of $\Pi$ and $\Pi'$ at step~$k$.
\end{proposition}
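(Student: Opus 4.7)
The plan is to simulate each timed rule by a chain of untimed rules that counts down the delay using fresh auxiliary symbols. For every rule $r : a \to v$ in $R_i$ with delay $t = e(r) \geq 1$, I introduce $t$ fresh symbols $a_r^1, \ldots, a_r^t \notin V$ and replace $r$ in $R'_i$ by the chain
\[
a \to (a_r^1, here), \qquad a_r^j \to (a_r^{j+1}, here) \text{ for } 1 \leq j < t, \qquad a_r^t \to v,
\]
while rules with $e(r) = 0$ are copied verbatim. The membrane structure and initial multisets are inherited ($\mu' = \mu$, $w'_{0,i} = w_{0,i}$), and $V'$ consists of $V$ together with all the auxiliary symbols. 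Because each $a_r^j$ is fresh, its only applicable rule in $\Pi'$ is the next link of its own chain, so distinct chain instances cannot interfere with each other or with any rule acting on $V$-objects.

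Correctness is proved by induction on $k$, maintaining the invariant that (a) $w'_{k,i}(a) = w_{k,i}(a)$ for every $a \in V$, and (b) the multiset of auxiliary symbols $a_r^j$ residing in membrane $i$ of $\Pi'$ is in bijection with the occurrences of $r$ in $\Pi$ that have already been running for $j$ ticks in that membrane. The base case $k=0$ is immediate. For the inductive step, maximal parallelism in $\Pi'$ forces every chain-advancement and every terminating step $a_r^t \to v$ to fire, since each such auxiliary symbol admits a unique applicable rule and no competing rule consumes it. The only remaining freedom concerns the first-links $a \to a_r^1$, which are in one-to-one correspondence with the original rules of $R_i$ acting on the shared pool of $V$-objects; hence a maximally parallel choice of first-links in $\Pi'$ is precisely a maximally parallel multiset $R_k$ of timed rules in $\Pi$ satisfying conditions (i) and (ii).

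The main point to verify is that production times align. By condition (iii), a timed rule $r$ started at step $j_0$ with $e(r) = t$ contributes its right-hand side to $w_{j_0 + t + 1,\cdot}$, since $rhs^{j_0}_{i,0}$ enters the sum precisely at step $k = j_0 + t$. In $\Pi'$, the corresponding chain consumes $a$ at step $j_0$ (matching the term $-lhs_i$ in (iii)), advances deterministically through $a_r^1, \ldots, a_r^t$, and fires the terminating rule $a_r^t \to v$ at step $j_0 + t$; the ensuing parallel-communication (tar) step then delivers $v$ to its target region so that it becomes available at step $j_0 + t + 1$, exactly as in $\Pi$. Combining this timing alignment with the maximality correspondence completes the inductive step and yields $w_{k,i}(a) = w'_{k,i}(a)$ for all $k \in \mathds{N}$ and all $a \in V$, as required.
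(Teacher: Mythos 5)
Your proof is correct and follows essentially the same strategy as the paper's: unfold each delay-$t$ rule into a chain of $t+1$ unit-time rules over fresh countdown symbols, and argue by induction on the number of steps that maximal parallelism forces every chain to advance in lockstep with the global clock, so that the right-hand side becomes available exactly at step $j_0+t+1$. The only (cosmetic) difference is that you attach the countdown token to the pending rule instance and keep it in the source membrane, applying the target indications only at the last link, whereas the paper's construction indexes the auxiliary symbols by the produced object and its remaining delay ($a_{e(r)-1},\ldots,a_0$) and ships them to their target membranes immediately.
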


\begin{proof} In what follows we show how starting from
a timed membrane system $\Pi=(V,\mu,w_{0,1},\ldots,w_{0,n},R_1,$
$\ldots,R_n,e)$ we may construct an untimed membrane system
$\Pi'=(V',\mu',w'_{0,1},\ldots,w'_{0,n},R'_1,\ldots,R'_n)$, where
\begin{itemize}
\item $V'=V \cup \{a_j\mid a\in V, 0 \leq j \leq m-1\}$, where $m=max_{r\in
R}e(r)$;
\item $\mu'=\mu$ and $w'_{0,i}=w_{0,i}$ for $1 \leq i \leq n$;
\item for each rule $r:u \rightarrow v$ of $R_i$, $1 \leq i \leq
n$ having $e(r)= 0$, we add $r$ to $R'_i$;
\item for each rule $r:u \rightarrow v$ of $R_i$, $1 \leq i \leq
n$ having $e(r)> 0$, we add to $R'_i$ the following sets of rules
which simulate properly the passage of $e(r)$ units of time:
\begin{itemize}
\item $u \rightarrow v'$, where $v'$ is derived from $v$ by
replacing each $a\in V$ by $a_{e(r)-1}\in V'$;
\item $a_j \rightarrow a_{j-1}$, $1 \leq j \leq e(r)-1$;
\item $a_0 \rightarrow a$.
\end{itemize}
\end{itemize}
We show that each step of the timed membrane system can be simulated
by the corresponding untimed membrane system, using induction
on the number of steps (time units) in timed membrane system.

Firstly, we consider a configuration
$C_0=(w_{0,1},\ldots,w_{0,n},0)$ of the timed membrane system and a
maximal multiset $R$ of rules such that
$C_0\stackrel{R}{\Longrightarrow}C_{1}$. The resulting configuration
$C_{1}=(w_{1,1},\ldots,w_{1,n},1)$ is given by
$w_{1,i}(a)=w_{0,i}(a)-lhs(i)(a)+rhs^0_{i,0}(a)$ for all $1\leq
i\leq n$ and $a \in V$. Following the construction above, the
initial configuration of the untimed membrane system is
$C'_0=(w'_{0,1},\ldots,w'_{0,n})$ where $w'_{0,i}(a)=w_{0,i}(a)$ for
all $1\leq i\leq n$ and $a \in V$. $R'$ is the multiset of rules
obtained from $R$ such that
$C'_0\stackrel{R'}{\Longrightarrow}C'_{1}$. The resulting
configuration $C'_1$ is given by
$w'_{1,i}(a)=w'_{0,i}(a)-lhs_{i}(a)+rhs_i(a)$ for all $1\leq i\leq
n$ and $a \in V'$. This configuration contains all the elements of
$C_1$ and some additional objects from $V'$ introduced to simulate
properly the passage of time. Regarding the elements $a\in V$, it
results that $rhs^0_{i,0}(a)=rhs_i(a)$, namely
$w'_{1,i}(a)=w_{1,i}(a)$. Therefore $C'_1$ equals $C_1$ regarding
the elements of $V$ (we ignore the new elements of $V'$
because they are used only to simulate the passage of time).

Secondly, we consider a configuration
$C_k=(w_{k,1},\ldots,w_{k,n},k)$ of the timed membrane system and a
maximal multiset $R$ of rules such that
$C_k\stackrel{R}{\Longrightarrow}C_{k+1}$. The resulting
configuration $C_{k+1}=(w_{(k+1),1},\ldots,$ $w_{(k+1),n},k+1)$ is
given by
$w_{(k+1),i}(a)=w_{k,i}(a)-lhs(i)(a)+\sum^{k}_{s=max(0,k-m)}
rhs^s_{i,0}(a)$ for all $1\leq i\leq n$ and $a \in V$. In the same
time, the multisets $rhs^s_{i,j}$ are transformed into
$rhs^{s}_{i,j-1}$ for $max(0,k-m) \leq s,j \leq k$. Following the
construction above, the configuration of the untimed membrane system
is $C'_k=(w'_{k,1},\ldots,w'_{k,n})$, where $w'_{k,i}(a)=w_{k,i}(a)$
for all $a\in V$, and $w'_{k,i}(a_j)=\sum^{k}_{s=max(0,k-m)}
rhs^{s}_{i,j}(a)$ for all $a_j \in V'\backslash V$. This means that
for all $1 \leq i \leq n$, the multiset $w'_{k,i}$ contains all the
objects from $w_{k,i}$ and some additional objects from $V'$. For
each $a\in V$ from the multiset $rhs^s_{i,j}$, the multiset $w'_{k,i}$
contains additional objects $a_j$. The restriction $max(0,k-m) \leq
s\leq k$ used when creating the object $a_j$ in membrane $i$ means
that an object $a$ has appeared in the right hand side of a rule
from timed membrane systems in the last $min(k,m)$ units of time,
but has to wait $j$ units of time until it should be added to
membrane $i$ in timed membrane systems. $R'$ is the multiset of
rules obtained from $R$ such that
$C'_k\stackrel{R'}{\Longrightarrow}C'_{k+1}$, with
$w'_{k+1,i}(a)=w'_{k,i}(a)-lhs_{i}(a)+rhs_i(a)$ for all $1\leq i\leq
n$ and $a \in V'$. Moreover, in this step some objects of $V'$ are
transformed into objects of $V$ by applying the generic rule $a_0
\rightarrow a$ (the other objects $a_j \in V'$ are transformed into
objects $a_{j-1} \in V'$ by applying the generic rules $a_j
\rightarrow a_{j-1}$). Finally, the number of objects $a\in V$
obtained in $\Pi'$ at this step corresponds to
$\sum^{k}_{s=max(0,k-m)}rhs^s_{i,0}(a)$. It results that
$\sum^{k}_{s=max(0,k-m)} rhs^s_{i,0}(a)=rhs_i(a)$, namely
$w'_{(k+1),i}(a)=w_{(k+1),i}(a)$. Therefore $C'_{k+1}$ equals
$C_{k+1}$ regarding the elements of $V$ (we ignore the
elements $a_j\in V'$ because they are used only to simulate the passage of time).
\end{proof}

In what follows we give an example that illustrates the statement of
Proposition \ref{tPtoP}.

\begin{example}
We consider a timed membrane system $\Pi=(V,\mu,w_1,w_2,R_1,R_2,e)$,
where:
\begin{itemize}
\item $V=\{a,b\}$; \qquad $\mu=[[~]_2]_1$; \qquad
$w_1=ab$; \qquad $w_2=a^2b$;
\item $R_1=\{r_1: b \rightarrow (b,in_2)\}$; \qquad
$R_2=\{r_2: a \rightarrow (a,out)\}$; \qquad
 $e(r_1)=0$, $e(r_2)=2$.
\end{itemize}

\noindent Since the initial configuration of the timed membrane
system $\Pi$ is $(ab,a^2b,0)$, then the evolution of the timed
membrane system in terms of configurations is:

$(ab,a^2b,0)$ $\stackrel{\{r_1+2r_2\}}{\quad\Longrightarrow\quad}$ $(a,b^2,1)$
$\stackrel{\emptyset}{\quad\Longrightarrow\quad}$ $(a,b^2,2)$
$\stackrel{\emptyset}{\quad\Longrightarrow\quad}$  $(a^3,b^2,3)$

\noindent Graphically this can be depicted as:

\begin{tikzpicture}[scale=1.1]
\begin{scope}[xshift=-0.2cm]
\draw[thick,rounded corners=4pt] (1.0,0.5) rectangle (2.8,2.0);

\node at (2.9,0.6) {$1$};

\node at (1.3,1.25) {$ab$};

\node at (3.35,1.25) {$\stackrel{\{r_1+2r_2\}}{\Longrightarrow}$};

\begin{scope}[xshift=1.6cm,yshift=-0.2cm]
\draw[thick,rounded corners=4pt] (0.0,1.0) rectangle (1.0,1.75);

\node at (1.1,1.0) {$2$};

\node at (0.5,1.35) {$a^2b$};
\end{scope}

\node at (2,0.2) {$t=0$};
\end{scope}

\begin{scope}[xshift=2.7cm]
\draw[thick,rounded corners=4pt] (1.0,0.5) rectangle (2.8,2.0);

\node at (2.9,0.6) {$1$};

\node at (1.3,1.25) {$a$};

\node at (3.35,1.25) {$\stackrel{\emptyset}{\Longrightarrow}$};

\begin{scope}[xshift=1.6cm,yshift=-0.2cm]
\draw[thick,rounded corners=4pt] (0.0,1.0) rectangle (1.0,1.75);

\node at (1.1,1.0) {$2$};

\node at (0.5,1.35) {$b^2$};
\end{scope}

\node at (2,0.2) {$t=1$};
\end{scope}

\begin{scope}[xshift=5.4cm]
\draw[thick,rounded corners=4pt] (1.0,0.5) rectangle (2.8,2.0);

\node at (2.9,0.6) {$1$};

\node at (1.3,1.25) {$a$};

\node at (3.35,1.25) {$\stackrel{\emptyset}{\Longrightarrow}$};

\begin{scope}[xshift=1.6cm,yshift=-0.2cm]
\draw[thick,rounded corners=4pt] (0.0,1.0) rectangle (1.0,1.75);

\node at (1.1,1.0) {$2$};

\node at (0.5,1.35) {$b^2$};
\end{scope}

\node at (2,0.2) {$t=2$};
\end{scope}

\begin{scope}[xshift=8.1cm]
\draw[thick,rounded corners=4pt] (1.0,0.5) rectangle (2.8,2.0);

\node at (2.9,0.6) {$1$};

\node at (1.3,1.25) {$a^3$};

\begin{scope}[xshift=1.6cm,yshift=-0.2cm]
\draw[thick,rounded corners=4pt] (0.0,1.0) rectangle (1.0,1.75);

\node at (1.1,1.0) {$2$};

\node at (0.5,1.35) {$b^2$};
\end{scope}

\node at (2,0.2) {$t=3$};
\end{scope}
\end{tikzpicture}

\noindent We construct an untimed membrane system
$\Pi'=(V',\mu',w'_1,w'_2,R'_1,R'_2)$, where:
\begin{itemize}
\item $V'=\{a,a_0,a_1,b,b_0,b_1\}$; \qquad $\mu=[[~]_2]_1$; \qquad
$w_1=ab$; \qquad $w_2=a^2b$;
\item $R_1=\{r_1: b \rightarrow (b,in_2)\}$; \qquad
$R_2=\{r^1_2: a \rightarrow (a_1,out);\;r^2_2: a_1 \rightarrow
a_0;\;r^3_2: a_0 \rightarrow a\}$.
\end{itemize}

\noindent Since the initial configuration of the untimed membrane
system $\Pi'$ is the same as the initial configuration of the timed
membrane system $\Pi$, namely $(ab,a^2b,0)$, then the evolution of
the untimed membrane system in terms of configurations is:

$(ab,a^2b)$ $\stackrel{\{r_1+2r^1_2\}}{\quad\Longrightarrow\quad}$
$(aa^2_1,b^2)$ $\stackrel{\{r^2_2\}}{\quad\Longrightarrow\quad}$ $(aa^2_0,b^2)$
$\stackrel{\{r^3_2\}}{\quad\Longrightarrow\quad}$  $(a^3,b^2)$

\noindent Graphically this can be depicted as:

\begin{tikzpicture}[scale=1.1]
\begin{scope}[xshift=-0.2cm]
\draw[thick,rounded corners=4pt] (1.0,0.5) rectangle (2.8,2.0);

\node at (2.9,0.6) {$1$};

\node at (1.3,1.25) {$ab$};

\node at (3.35,1.25) {$\stackrel{\{r_1+2r^1_2\}}{\Longrightarrow}$};

\begin{scope}[xshift=1.6cm,yshift=-0.2cm]
\draw[thick,rounded corners=4pt] (0.0,1.0) rectangle (1.0,1.75);

\node at (1.1,1.0) {$2$};

\node at (0.5,1.35) {$a^2b$};
\end{scope}

\end{scope}

\begin{scope}[xshift=2.7cm]
\draw[thick,rounded corners=4pt] (1.0,0.5) rectangle (2.8,2.0);

\node at (2.9,0.6) {$1$};

\node at (1.3,1.25) {$aa_1^2$};

\node at (3.35,1.25) {$\stackrel{\{r^2_2\}}{\Longrightarrow}$};

\begin{scope}[xshift=1.6cm,yshift=-0.2cm]
\draw[thick,rounded corners=4pt] (0.0,1.0) rectangle (1.0,1.75);

\node at (1.1,1.0) {$2$};

\node at (0.5,1.35) {$b^2$};
\end{scope}

\end{scope}

\begin{scope}[xshift=5.4cm]
\draw[thick,rounded corners=4pt] (1.0,0.5) rectangle (2.8,2.0);

\node at (2.9,0.6) {$1$};

\node at (1.3,1.25) {$aa_0^2$};

\node at (3.35,1.25) {$\stackrel{\{r^3_2\}}{\Longrightarrow}$};

\begin{scope}[xshift=1.6cm,yshift=-0.2cm]
\draw[thick,rounded corners=4pt] (0.0,1.0) rectangle (1.0,1.75);

\node at (1.1,1.0) {$2$};

\node at (0.5,1.35) {$b^2$};
\end{scope}

\end{scope}

\begin{scope}[xshift=8.1cm]
\draw[thick,rounded corners=4pt] (1.0,0.5) rectangle (2.8,2.0);

\node at (2.9,0.6) {$1$};

\node at (1.3,1.25) {$a^3$};

\begin{scope}[xshift=1.6cm,yshift=-0.2cm]
\draw[thick,rounded corners=4pt] (0.0,1.0) rectangle (1.0,1.75);

\node at (1.1,1.0) {$2$};

\node at (0.5,1.35) {$b^2$};
\end{scope}

\end{scope}
\end{tikzpicture}

\noindent If we are interested only in the symbols of $V$
in the untimed evolution, then we have:

\begin{tabular}{c@{\hspace{0ex}}c@{\hspace{0ex}}c@{\hspace{0ex}}c
@{\hspace{0ex}}c@{\hspace{0ex}}c@{\hspace{0ex}}c@{\hspace{0ex}}c
@{\hspace{0ex}}c@{\hspace{0ex}}c @{\hspace{0ex}}c
@{\hspace{0ex}}c@{\hspace{0ex}}c@{\hspace{0ex}}c@{\hspace{0ex}}c}

$(ab$,&$a^2b$,&$0)$ &$\stackrel{\{r_1+2r_2\}}{\Longrightarrow}$&
$(a$,&$b^2$,&$1)$ &$\stackrel{\emptyset}{\Longrightarrow}$
&$(a$,&$b^2$,&$2)$&
$\stackrel{\emptyset}{\Longrightarrow}$ & $(a^3$,&$b^2$,&$3)$\\

$||$ &$||$&&& $||$ &$||$&&& $||$ &$||$ &&& $||$&$||$\\

$(ab \cap V$,& $a^2b\cap V)$&&
$\stackrel{\{r_1+2r^1_2\}}{\Longrightarrow}$& $(aa^2_1\cap
V$,&$b^2\cap V)$& &$\stackrel{\{r^2_2\}}{\Longrightarrow}$&
$(aa^2_0\cap V$,&$b^2\cap V)$&
&$\stackrel{\{r^3_2\}}{\Longrightarrow}$&
$(a^3\cap V$,&$b^2\cap V)$&\\
\end{tabular}

\noindent and thus the statement of Proposition \ref{tPtoP} holds.

\end{example}

It is easy to prove that the class of timed membrane systems
includes the class of untimed membrane systems, since we can assign
$0$ to all the rules by the timing function $e$.

\section{Timed Petri Nets with Localities}
\label{subsection:cpn}

An extension of Petri nets with localities is defined by adding
delays to transitions (like in coloured Petri nets \cite{Jensen92}).
The value of the global clock is kept in a variable $gc$.

\begin{definition} A timed Petri net with localities
$\mathcal{N}=(P,T,W,L,D,M_0)$ is given~by:
\begin{enumerate}
\item[$(i)$] finite disjoint sets $P$ of places and $T$ of
transitions;
\item[$(ii)$] a weight function $W: (T \times P) \cup (P \times T)
\rightarrow \mathds{N}$;
\item[$(iii)$] a locality mapping $L: T \rightarrow \mathds{N}$;
\item[$(iv)$] a delay mapping $D: T \rightarrow \mathds{N}$;
\item[$(v)$] an initial marking $M_0: P \cup \{gc\} \rightarrow \mathds{N}$.
\end{enumerate}
\end{definition}
\noindent If $W(x,y)\geq 1$ for some $(x,y)\in (T \times P)\cup (P
\times T)$, then $(x,y)$ is an arc from the place (transition) $x$
to the transition (place) $y$. The locality mapping $L$ defines sets
of transitions called localities (depending on the number associated
to each transition). The delay mapping $D$ introduces a time delay to
each object created by a transition; the delays indicate how long
the objects cannot be used in other transitions. The initial marking
$M_0$ assigns to each place a number of tokens, and value $0$ to the
global clock $gc$.

\begin{figure}[ht]
\begin{tabular}{c@{\hspace{4ex}}c}
\begin{tikzpicture}[scale=1.2]
\begin{scope}[xshift=4cm,yshift=1cm]
\draw[thick,rounded corners=15pt] (0.0,0.5) rectangle (2.0,1.5);

\node at (0.0,0.5) {$a$};

\node at (1.0,1.0) {$\bullet\quad \bullet$};
\end{scope}

\begin{scope}[xshift=8cm,yshift=1cm]
\draw[thick,rounded corners=15pt] (0.0,0.5) rectangle (2.0,1.5);

\node at (2.0,0.5) {$b$};

\node at (1.0,1.0) {$\bullet\quad \bullet\quad \bullet$};
\end{scope}

\begin{scope}[xshift=6.5cm,yshift=-0.5cm]
\draw[thick,rounded corners=0pt] (0.0,0.5) rectangle (1.0,1.5);

\node at (0.5,1.0) {$r_2@3$};
\end{scope}

\begin{scope}[xshift=6.5cm,yshift=2.5cm]
\draw[thick,rounded corners=0pt] (0.0,0.5) rectangle (1.0,1.5);

\node at (0.5,1.0) {$r_1@2$};
\end{scope}

\draw[->] (5.7,1.6) -- (6.5,1.0);

\node at (6.2,1.4)[anchor=west] { 1};

\draw[->] (6.5,0.0) -- (4.5,1.5);

\node at (5.2,1.0)[anchor=west] { 1};

\draw[->] (8.2,1.6) -- (7.5,1.0);

\node at (7.4,1.4)[anchor=west] { 2};

\draw[->] (7.5,0.0) -- (9.2,1.5);

\node at (8.2,1.0)[anchor=west] { 1};

\draw[->] (5.7,2.4) -- (6.5,3.0);

\node at (6.2,2.7)[anchor=west] { 2};

\draw[->] (6.5,4.0) -- (4.5,2.5);

\node at (5.4,3.2)[anchor=west] { 1};

\draw[->] (8.2,2.4) -- (7.5,3.0);

\node at (7.9,2.8)[anchor=west] { 1};

\draw[->] (7.5,4.0) -- (9.2,2.5);

\node at (8.4,3.5)[anchor=west] { 2};
\end{tikzpicture}&
\begin{minipage}{7cm}\vspace{-28ex} { Places are drawn as
rounded lines with tokens placed inside. A transition is drawn as a
rectangle containing a label, and the delay it introduces for the
newly created tokens. Transitions are connected to places by
weighted directed arcs.}
\end{minipage}\\
\end{tabular}
\centering\vspace{-2ex}\caption{A Timed Petri Net}
\label{figure:example_petri}
\end{figure}

Markings represent global states of the timed
Petri nets with localities, and they are defined as functions from $P
\cup \{gc\}$ to $\mathds{N}$. A Petri net $\mathcal{N}$ evolves at a
time step $k$ from a marking~$M$ to another marking~$M'$ by a
multiset of transitions $U:T\rightarrow \mathbb{N}$ (e.g., $U(tr)=2$
for $tr\in T$ means that $U$ contains twice the transition $tr$). If
the multiset $U$ of transitions is empty, then the only action is
incrementing the global clock $gc$. Given a multiset of transitions
$U$, we denote by $pre(U)(p)=\sum_{tr\in U} U(tr) \cdot W(p,tr)$ the
multiset of tokens associated to the input arcs ($P\times T$) of all
transitions $tr\in\!U$. In a similar way, by
$post^k_{j}(p)=\sum_{tr\in U;\;D(tr)=j} U(tr) \cdot W(tr,p)$ is
denoted the multiset of tokens associated to the output arcs
($T\times P$) which are added to their corresponding places after
$j$ units of time ($k$ represents the current time). We denote by
$m'=max_{tr\in U}D(tr)$ the maximum delay inferred by the
transitions of $U$. A marking $M$ leads in a max-enabled way to a
marking $M'$ via a multiset $U$ of transitions (denoted by
$M[U\rangle_{max} M'$) if $M'(gc)=M(gc)+1$ and for each place $p\in
P$ the following conditions hold:
\begin{itemize}
\item[$(i)$] $pre(U)(p) \leq M(p)$;
\item[$(ii)$] there is {\bf no} transition $tr\in U$ such that
$pre(\{tr\})(p)+pre(U)(p) \leq M(p)$;
\item[$(iii)$] $M'(p)=M(p)-pre(U)(p)+
\sum^k_{s=max(0,k-m')}post^s_{0}(p)$.
\end{itemize}
According to $(i)$, a marking $M$ has in each place $p$ enough
tokens to enable the execution of the multiset $U$ of transitions.
The maximal parallelism is captured by $(ii)$, saying that an extra
transition cannot be added to $U$. Condition $(iii)$ describes the
effect of the transitions application by adding all the tokens
having $j=0$ created in the last $min(k,m')$ steps which are ready
to be used in Petri nets evolution. Before incrementing the global
clock, all the multisets $post^s_{j}(p)$ are transformed into
$post^{s}_{j-1}(p)$ for $max(0,k-m') \leq s,j \leq k$.

\begin{proposition}
\label{tPNtoPN} For every timed Petri net with localities
$\mathcal{N}=(P,T,W,L,D,M_0)$ there exists a Petri net with
localities $\mathcal{N}'=(P',T',W',L',M'_0)$ that simulates the
evolution of $\mathcal{N}$ (with respect to places of $P$).
Formally, for all $p\in P$ and $k \in \mathds{N}$ we have
$M_k(p)=M'_k(p)$, where $M_k$ and $M'_k$ are markings of $\mathcal{N}$ and
$\mathcal{N}'$ at step~$k$.
\end{proposition}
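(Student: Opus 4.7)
The plan is to mirror, step by step, the construction used in the proof of Proposition~\ref{tPtoP}, replacing objects by tokens and evolution rules by transitions. First, set $m' = \max_{tr \in T} D(tr)$ and enlarge the set of places with "aging copies": $P' = P \cup \{p_j \mid p \in P,\ 0 \leq j \leq m'-1\}$. The initial marking is inherited on the original places and set to $0$ on the new ones: $M'_0(p) = M_0(p)$ for $p \in P$, and $M'_0(p_j) = 0$.

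Second, redefine the transitions so that \emph{delays are simulated structurally}. Every $tr \in T$ with $D(tr) = 0$ is kept unchanged in $T'$, with its locality and arc weights unchanged. Every $tr \in T$ with $D(tr) = j > 0$ is replaced by a transition $tr'$ having the same input arcs and locality as $tr$, but whose output arcs are redirected: $W'(tr', p_{j-1}) = W(tr, p)$ for every $p \in P$. In addition, for each $p \in P$ introduce the aging transitions $age_{p,j}$ ($1 \leq j \leq m'-1$) with a single input arc from $p_j$ and a single output arc to $p_{j-1}$, and a release transition $rel_p$ with a single input arc from $p_0$ and a single output arc to $p$; assign each of these a fresh locality. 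Since these new transitions act on pairwise disjoint places, they never compete with each other or with the images of the original transitions, so under the max-enabled semantics they fire exactly when their preconditions are met.

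Third, establish the correspondence by induction on the step count $k$. The inductive invariant has two parts: (a) $M_k(p) = M'_k(p)$ for all $p \in P$, and (b) $M'_k(p_j) = \sum_{s=\max(0,k-m')}^{k} post^s_j(p)$ for all aging places, i.e.\ the $p_j$ hold exactly the tokens that are $j$ units away from being released. The base case $k=0$ holds because both sums in (b) are empty. For the inductive step, given a max-enabled firing $M_k [U\rangle_{max} M_{k+1}$ in $\mathcal{N}$, choose $U'$ in $\mathcal{N}'$ consisting of: the same $tr$ with $D(tr)=0$, the image $tr'$ of every $tr \in U$ with $D(tr) > 0$, all enabled transitions $age_{p,j}$, and all enabled $rel_p$. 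Condition~$(i)$ is inherited from $U$ on original places, and holds on aging places by invariant (b); condition $(iii)$ on an original place $p$ gives $M'_{k+1}(p) = M'_k(p) - pre(U)(p) + (\text{tokens released via } rel_p)$, where the released tokens are exactly those produced by transitions whose delay has elapsed, i.e.\ $\sum_{s=\max(0,k-m')}^{k} post^s_0(p)$, matching $M_{k+1}(p)$. The analogous recomputation on aging places preserves invariant~(b), because every token in $p_j$ is moved to $p_{j-1}$ while new tokens enter $p_{D(tr)-1}$ from fresh firings of $tr'$.

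The main obstacle is verifying the maximality condition~$(ii)$ for $U'$: one must argue that no further transition can be added to $U'$ without violating a precondition. On the original transitions this is inherited from the maximality of $U$ in $\mathcal{N}$, using invariant~(a) and the fact that the images $tr'$ have the same inputs as $tr$; on the aging transitions it holds by construction, since we explicitly included \emph{every} enabled $age_{p,j}$ and $rel_p$ in $U'$. The argument is structurally identical to that of Proposition~\ref{tPtoP}, with the only bookkeeping subtlety being to track that a token deposited by a delayed transition with delay $j$ reaches its original place $p$ after exactly $j$ untimed steps by traversing the chain $p_{j-1} \to p_{j-2} \to \cdots \to p_0 \to p$, thereby reproducing the $post^s_0$ contribution in condition~$(iii)$ at the correct step.
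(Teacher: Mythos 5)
Your proposal is correct and follows essentially the same route as the paper: both simulate each delay $D(tr)=j$ by a chain of auxiliary ``aging'' places through which a produced token percolates one step per tick before being released into the real place, and both argue by induction on $k$ with the invariant that the auxiliary places hold exactly the pending $post^s_j(p)$ tokens. The only (inessential) differences are that the paper attaches a separate chain $p^{D(tr)-1}_{tr},\ldots,p^0_{tr}$ to each transition--place pair rather than sharing one chain per place, and keeps the locality $L(tr)$ for the auxiliary transitions instead of assigning fresh ones; your treatment of the maximality condition $(ii)$ is in fact more explicit than the paper's.
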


\begin{proof} In what follows we show how starting from
a timed Petri net with localities $\mathcal{N}=(P,T,W,L,$
$D,M_0)$, we construct an untimed Petri net with localities
$\mathcal{N}'=(P',T',W',L',M'_0)$, where
\begin{itemize}
\item for every $p\in P$ and $tr\in T$ such that $W(p,tr)>0$, we
consider additional places $p,p^{0}_{tr}, \ldots, p^{D(tr)-1}_{tr}$
in $P'$; if $D(tr)=0$ then only $p\in P'$;

\item for every $tr\in T$ and $p\in P$ such that $W(tr,p)>0$, we consider
additional transitions $tr,tr^{0}, \ldots,$ $tr^{D(tr)-1}$ in $T'$; if
$D(tr)=0$ then $tr\in T'$;

\item for every $p\in P$ and $tr\in T$ such that $W(p,tr)>0$,
we consider the weights $W'(p,tr)$ in $\mathcal{N}'$:
\begin{itemize}
\item if $D(tr)=0$ then $W'(p,tr)=W(p,tr)$;
\item if $D(tr)>0$ then $W'(p,tr)=W(p,tr)$, and \\
$W'(tr,p^{D(tr)-1}_{tr})=W'(p^j_{tr},tr^j)=W'(tr^i,p^{i-1}_{tr})=W(tr,p)$
for $0\! \leq\! j<i \!\leq\! D(tr)-1$;
\end{itemize}

\item for every $p\in P$ and $tr\in T$ such that
$W(tr,p)>0$, we consider the following weights $W'(tr,p)$ in
$\mathcal{N}'$:
if $D(tr)=0$ then $W'(tr,p)=W(tr,p)$, else $W'(tr^{0},p)=W(tr,p)$;

\item for every $tr\in T$, we take the same locality label
$l=L(tr)$ for the new transitions $tr,tr^{0}, \ldots, tr^{D(tr)-1}$;

\item if $p \in P$ then
$M'_0(p)=M_0(p)$, and if $p\in P'\backslash P$ then $M'_0(p)=0$ .
\end{itemize}
We show that each step of the timed Petri nets with localities can
be simulated by the corresponding untimed Petri nets with localities;
we prove this by induction on the number of steps (time units)
in timed Petri nets with localities.

Firstly, we consider a marking $M_0$ of the timed Petri net with
localities and a multiset of transitions $U$ such that $M_0[U
\rangle_{max}M_{1}$. The resulting marking $M_{1}$ is given by
$M_1(p)=M_0(p)-pre(U)(p)+post^0_{0}(p)$ for all $p\in P$. Following
the construction above, the initial marking of the untimed Petri net
with localities is $M'_0$, where $M'_0(p)=M_0(p)$ for all $p \in P'$.
$U'$ is the multiset of transitions obtained from $U$ such that
$M'_0[U' \rangle_{max}M'_{1}$. The resulting marking $M'_{1}$ is
given by $M'_1(p')=M'_0(p')-pre(U')(p)+post(U')(p)$ for all $p'\in
P'$, where $post(U')(p)=\sum_{tr\in U'}( U'(tr) \cdot W'(tr,p))$.
This marking contains all the places of $M_1$ and some additional
places from $P'$. Regarding the places $p\in P$, it results that
$post^0_{0}(p)=post(U')(p)$, namely $M'_1(p)=M_1(p)$. Therefore
$M'_1$ equals $M_1$ regarding the number of tokens from the places
of $P$ (we ignore the new places of $P'$ because they do not play
any role at this step).

Secondly, we consider a marking $M_k$ of the timed Petri net with localities
and a multiset $U$ of transitions such that $M_k[U \rangle_{max}M_{k+1}$. The
resulting configuration $M_{k+1}$ is given by $M_{k+1}(p)=M_k(p)-pre(U)(p)+
\sum^k_{s=max(0,k-m')}post^s_{0}(p)$ for all $p\in P$. In the same time, the
multisets of tokens $post^s_{j}(p)$ are renamed by $post^{s}_{j-1}(p)$ for all
$max(k-m',0) \leq s,j \leq k$ and $p\in P$. Following the construction above,
the marking of the untimed Petri net with localities is $M'_k$, where
$M'_k(p)=M_k(p)$ for all $p\in P$, and $M'_k(p^j_{tr})=\sum^{k}_{s=max(0,k-m')}
post^{s}_{j}(p)$ for all additional $p^j_{tr} \in P'\backslash P$, $tr\in T$
and $0\leq j \leq D(tr)-1$. This means that the common places of both nets have
the same number of tokens, while for the additional places appearing only in
$P'$ we add tokens such that for each token from $post^{s}_{j}(p)$ obtained
after firing the transition $tr$, the place $p^{j}_{tr}$ contains a token. The
restriction $max(0,k-m') \leq s\leq k$ (used when creating a token in a new
place $p^{j}_{tr}$ of $P'$) means that a token appears on an output arc of
transition $tr$ in timed Petri nets during the last $min(k,m')$ units of time;
this token has to wait $j$ units of time until it is added to place $p$ of $P$.
The multiset of rules $U'$ is obtained from $U$ such that $M'_k[U'
\rangle_{max}M'_{k+1}$, with $M'_{k+1}(p)=M'_{k}(p)-pre(U')(p)+post(U')(p)$ for
all $p\in P'$. Moreover, in this step some tokens are transferred from places
of $P'$ into places of $P$ by firing the transitions $tr^0$ (the other tokens
from places $p^j_{tr}\in P'$ are transferred into places $p^{j-1}_{tr}\in P'$
by firing the transitions $tr^j$). Thus, the number of tokens obtained in
places $p\in P$ at each step $k$ is equal to $\sum^k_{s=k-m'}post^s_{0}(p)$. It
results that $\sum^k_{s=k-m'}post^s_{0}(p)=post(U')(p)$, namely
$M'_{k+1}(p)=M_{k+1}(p)$ for all $p\in P$. Therefore $M'_{k+1}$ equals
$M_{k+1}$ regarding the number of tokens from the places of $P$ (we ignore the
remaining places $p^j_{tr}\in P'$ because they are used only to simulate the
passage of time).
\end{proof}

\begin{example}
We consider a timed Petri net with localities
$\mathcal{N}=(P,T,W,L,D,M_0)$, where
\begin{itemize}
\item $P=\{(a,1),(a,2),(b,1),(b,2)\}$; \qquad
$T=\{tr^{r_1}_1,tr^{r_2}_2\}$;

\item $D(tr^{r_1}_1)=0$; \qquad $D(tr^{r_2}_2)=2$; \qquad
$L(tr^{r_1}_1)=1$; \qquad $L(tr^{r_2}_2)=2$;

\item $W((a,1),t^{r_2}_2)=W(tr^{r_2}_2,(a,2))=W((b,1),t^{r_1}_1)=W(tr^{r_1}_1,(b,2))=1$

\item $M_0((a,1))=M_0((b,1))=M_0((b,2))=1$; \qquad $M_0((a,2))=2$; \qquad $M_0(gc)=0$.
\end{itemize}

\noindent Graphically the system at time unit $gc=0$ can be
represented as follows:

\begin{center}
\begin{tikzpicture}[scale=1.2]
\begin{scope}[xshift=0cm,yshift=0cm]
\begin{scope}[xshift=2cm,yshift=0cm]
\draw[thick,rounded corners=15pt] (0.0,0.5) rectangle (2.0,1.5);

\node at (-0.2,0.5) {$(a,1)$};

\node at (1.0,1.0) {$\bullet$};
\end{scope}

\begin{scope}[xshift=10cm,yshift=0cm]
\draw[thick,rounded corners=15pt] (0.0,0.5) rectangle (2.0,1.5);

\node at (2.2,0.5) {$(a,2)$};

\node at (1.0,1.0) {$\bullet\quad \bullet$};
\end{scope}

\begin{scope}[xshift=6.5cm,yshift=0cm]
\draw[thick,rounded corners=0pt] (0.0,0.5) rectangle (1.0,1.5);

\node at (0.5,1.0) {$tr^{r_2}_2@2$};
\end{scope}

\draw[<-] (4.05,1.0) -- (6.45,1.0);

\node at (5.2,1.2)[anchor=west] { 1};

\draw[<-] (7.55,1.0) -- (9.95,1.0);

\node at (8.8,1.2)[anchor=west] { 1};
\end{scope}

\begin{scope}[xshift=0cm,yshift=-1.5cm]
\begin{scope}[xshift=2cm,yshift=0cm]
\draw[thick,rounded corners=15pt] (0.0,0.5) rectangle (2.0,1.5);

\node at (-0.2,0.5) {$(b,1)$};

\node at (1.0,1.0) {$\bullet$};
\end{scope}

\begin{scope}[xshift=10cm,yshift=0cm]
\draw[thick,rounded corners=15pt] (0.0,0.5) rectangle (2.0,1.5);

\node at (2.2,0.5) {$(b,2)$};

\node at (1.0,1.0) {$\bullet$};
\end{scope}

\begin{scope}[xshift=6.5cm,yshift=0cm]
\draw[thick,rounded corners=0pt] (0.0,0.5) rectangle (1.0,1.5);

\node at (0.5,1.0) {$tr^{r_1}_1@0$};
\end{scope}

\draw[->] (4.05,1.0) -- (6.45,1.0);

\node at (5.2,1.2)[anchor=west] { 1};

\draw[->] (7.55,1.0) -- (9.95,1.0);

\node at (8.8,1.2)[anchor=west] { 1};
\end{scope}
\end{tikzpicture}
\end{center}

\noindent For $gc=1$ and $gc=2$, the timed Petri net with localities can
be represented as follows:

\begin{center}
\begin{tikzpicture}[scale=1.2]
\begin{scope}[xshift=0cm,yshift=0.0cm]
\begin{scope}[xshift=2cm,yshift=0cm]
\draw[thick,rounded corners=15pt] (0.0,0.5) rectangle (2.0,1.5);

\node at (-0.2,0.5) {$(a,1)$};

\node at (1.0,1.0) {$\bullet$};
\end{scope}

\begin{scope}[xshift=10cm,yshift=0cm]
\draw[thick,rounded corners=15pt] (0.0,0.5) rectangle (2.0,1.5);

\node at (2.2,0.5) {$(a,2)$};
\end{scope}

\begin{scope}[xshift=6.5cm,yshift=0cm]
\draw[thick,rounded corners=0pt] (0.0,0.5) rectangle (1.0,1.5);

\node at (0.5,1.0) {$tr^{r_2}_2@2$};
\end{scope}

\draw[<-] (4.05,1.0) -- (6.45,1.0);

\node at (5.2,1.2)[anchor=west] { 1};

\draw[<-] (7.55,1.0) -- (9.95,1.0);

\node at (8.8,1.2)[anchor=west] { 1};
\end{scope}

\begin{scope}[xshift=0cm,yshift=-1.5cm]
\begin{scope}[xshift=2cm,yshift=0cm]
\draw[thick,rounded corners=15pt] (0.0,0.5) rectangle (2.0,1.5);

\node at (-0.2,0.5) {$(b,1)$};

\end{scope}

\begin{scope}[xshift=10cm,yshift=0cm]
\draw[thick,rounded corners=15pt] (0.0,0.5) rectangle (2.0,1.5);

\node at (2.2,0.5) {$(b,2)$};

\node at (1.0,1.0) {$\bullet\quad\bullet$};
\end{scope}

\begin{scope}[xshift=6.5cm,yshift=0cm]
\draw[thick,rounded corners=0pt] (0.0,0.5) rectangle (1.0,1.5);

\node at (0.5,1.0) {$tr^{r_1}_1@0$};
\end{scope}

\draw[->] (4.05,1.0) -- (6.45,1.0);

\node at (5.2,1.2)[anchor=west] { 1};

\draw[->] (7.55,1.0) -- (9.95,1.0);

\node at (8.8,1.2)[anchor=west] { 1};
\end{scope}
\end{tikzpicture}
\end{center}

\noindent while for all $gc\geq3$ we have the following representation

\begin{center}
\begin{tikzpicture}[scale=1.2]
\begin{scope}[xshift=0cm,yshift=0.0cm]
\begin{scope}[xshift=2cm,yshift=0cm]
\draw[thick,rounded corners=15pt] (0.0,0.5) rectangle (2.0,1.5);

\node at (-0.2,0.5) {$(a,1)$};

\node at (1.0,1.0) {$\bullet\;\bullet\;\bullet$};
\end{scope}

\begin{scope}[xshift=10cm,yshift=0cm]
\draw[thick,rounded corners=15pt] (0.0,0.5) rectangle (2.0,1.5);

\node at (2.2,0.5) {$(a,2)$};

\end{scope}

\begin{scope}[xshift=6.5cm,yshift=0cm]
\draw[thick,rounded corners=0pt] (0.0,0.5) rectangle (1.0,1.5);

\node at (0.5,1.0) {$tr^{r_2}_2@2$};
\end{scope}

\draw[<-] (4.05,1.0) -- (6.45,1.0);

\node at (5.2,1.2)[anchor=west] { 1};

\draw[<-] (7.55,1.0) -- (9.95,1.0);

\node at (8.8,1.2)[anchor=west] { 1};
\end{scope}

\begin{scope}[xshift=0cm,yshift=-1.5cm]
\begin{scope}[xshift=2cm,yshift=0cm]
\draw[thick,rounded corners=15pt] (0.0,0.5) rectangle (2.0,1.5);

\node at (-0.2,0.5) {$(b,1)$};

\end{scope}

\begin{scope}[xshift=10cm,yshift=0cm]
\draw[thick,rounded corners=15pt] (0.0,0.5) rectangle (2.0,1.5);

\node at (2.2,0.5) {$(b,2)$};

\node at (1.0,1.0) {$\bullet\quad\bullet$};
\end{scope}

\begin{scope}[xshift=6.5cm,yshift=0cm]
\draw[thick,rounded corners=0pt] (0.0,0.5) rectangle (1.0,1.5);

\node at (0.5,1.0) {$tr^{r_1}_1@0$};
\end{scope}

\draw[->] (4.05,1.0) -- (6.45,1.0);

\node at (5.2,1.2)[anchor=west] { 1};

\draw[->] (7.55,1.0) -- (9.95,1.0);

\node at (8.8,1.2)[anchor=west] { 1};
\end{scope}
\end{tikzpicture}
\end{center}

\noindent We construct an untimed Petri net with localities
$\mathcal{N}'=(P',T',W',L',M'_0)$, where
\begin{itemize}
\item $P=\{(a,1),p,p^0,p^1,(b,1),(b,2)\}$ and
$T=\{tr^{r_1}_1,tr,tr^0,tr^1\}$, where $p=(a,2)$ and
$tr=tr^{r_2}_2$;

\item $L(tr^{r_1}_1)=1$; \qquad $L(tr)=L(tr^0)=L(tr^1)=2$;

\item $W((b,1),t^{r_1}_1)=W(tr^{r_1}_1,(b,2))=1$

\item $W(p,tr)=W(tr,p^1)=W(p^1,tr^1)=W(tr^1,p^0)=W(p^0,tr^0)=W(tr^0,(a,1))=1$

\item $M_0((a,1))=M_0((b,1))=M_0((b,2))=1$; \qquad $M_0(p)=2$; \qquad $M_0(p^0)=M_0(p^1)=0$.
\end{itemize}

\noindent Graphically, the initial system can be represented as follows:

\begin{center}
\begin{tikzpicture}[scale=1.1]
\begin{scope}[xshift=0cm,yshift=0cm]
\begin{scope}[xshift=2cm,yshift=0cm]
\draw[thick,rounded corners=15pt] (0.0,0.5) rectangle (1.0,1.5);

\node at (-0.2,0.5) {$(a,1)$};

\node at (0.5,1.0) {$\bullet$};
\end{scope}

\begin{scope}[xshift=14cm,yshift=0cm]
\draw[thick,rounded corners=15pt] (0.0,0.5) rectangle (1.0,1.5);

\node at (1.2,0.5) {$p$};

\node at (0.5,1.0) {$\bullet\; \bullet$};
\end{scope}

\begin{scope}[xshift=4cm,yshift=0cm]
\draw[thick,rounded corners=0pt] (0.0,0.5) rectangle (1.0,1.5);

\node at (0.5,1.0) {$tr^0$};
\end{scope}

\begin{scope}[xshift=6cm,yshift=0cm]
\draw[thick,rounded corners=15pt] (0.0,0.5) rectangle (1.0,1.5);

\node at (1.0,0.5) {$p^0$};

\end{scope}

\begin{scope}[xshift=8cm,yshift=0cm]
\draw[thick,rounded corners=0pt] (0.0,0.5) rectangle (1.0,1.5);

\node at (0.5,1.0) {$tr^1$};
\end{scope}

\begin{scope}[xshift=10cm,yshift=0cm]
\draw[thick,rounded corners=15pt] (0.0,0.5) rectangle (1.0,1.5);

\node at (1.0,0.5) {$p^1$};
\end{scope}

\begin{scope}[xshift=12cm,yshift=0cm]
\draw[thick,rounded corners=0pt] (0.0,0.5) rectangle (1.0,1.5);

\node at (0.5,1.0) {$tr$};
\end{scope}

\draw[<-] (3.05,1.0) -- (3.95,1.0);

\node at (3.4,1.2)[anchor=west] { 1};

\draw[<-] (5.05,1.0) -- (5.95,1.0);

\node at (5.4,1.2)[anchor=west] { 1};

\draw[<-] (7.05,1.0) -- (7.95,1.0);

\node at (7.4,1.2)[anchor=west] { 1};

\draw[<-] (9.05,1.0) -- (9.95,1.0);

\node at (9.4,1.2)[anchor=west] { 1};

\draw[<-] (11.05,1.0) -- (11.95,1.0);

\node at (11.4,1.2)[anchor=west] { 1};

\draw[<-] (13.05,1.0) -- (13.95,1.0);

\node at (13.4,1.2)[anchor=west] {1};
\end{scope}

\begin{scope}[xshift=0cm,yshift=-1.5cm]
\begin{scope}[xshift=2cm,yshift=0cm]
\draw[thick,rounded corners=15pt] (0.0,0.5) rectangle (2.0,1.5);

\node at (-0.2,0.5) {$(b,1)$};

\node at (1.0,1.0) {$\bullet$};
\end{scope}

\begin{scope}[xshift=10cm,yshift=0cm]
\draw[thick,rounded corners=15pt] (0.0,0.5) rectangle (2.0,1.5);

\node at (2.2,0.5) {$(b,2)$};

\node at (1.0,1.0) {$\bullet$};
\end{scope}

\begin{scope}[xshift=6.5cm,yshift=0cm]
\draw[thick,rounded corners=0pt] (0.0,0.5) rectangle (1.0,1.5);

\node at (0.5,1.0) {$tr^{r_1}_1$};
\end{scope}

\draw[->] (4.05,1.0) -- (6.45,1.0);

\node at (5.2,1.2)[anchor=west] { 1};

\draw[->] (7.55,1.0) -- (9.95,1.0);

\node at (8.8,1.2)[anchor=west] { 1};
\end{scope}
\end{tikzpicture}
\end{center}

\noindent
After one step, we obtain

\begin{center}
\begin{tikzpicture}[scale=1.1]
\begin{scope}[xshift=0cm,yshift=0cm]
\begin{scope}[xshift=2cm,yshift=0cm]
\draw[thick,rounded corners=15pt] (0.0,0.5) rectangle (1.0,1.5);

\node at (-0.2,0.5) {$(a,1)$};

\node at (0.5,1.0) {$\bullet$};
\end{scope}

\begin{scope}[xshift=14cm,yshift=0cm]
\draw[thick,rounded corners=15pt] (0.0,0.5) rectangle (1.0,1.5);

\node at (1.2,0.5) {$p$};

\end{scope}

\begin{scope}[xshift=4cm,yshift=0cm]
\draw[thick,rounded corners=0pt] (0.0,0.5) rectangle (1.0,1.5);

\node at (0.5,1.0) {$tr^0$};
\end{scope}

\begin{scope}[xshift=6cm,yshift=0cm]
\draw[thick,rounded corners=15pt] (0.0,0.5) rectangle (1.0,1.5);

\node at (1.0,0.5) {$p^0$};

\end{scope}

\begin{scope}[xshift=8cm,yshift=0cm]
\draw[thick,rounded corners=0pt] (0.0,0.5) rectangle (1.0,1.5);

\node at (0.5,1.0) {$tr^1$};
\end{scope}

\begin{scope}[xshift=10cm,yshift=0cm]
\draw[thick,rounded corners=15pt] (0.0,0.5) rectangle (1.0,1.5);

\node at (1.0,0.5) {$p^1$};

\node at (0.5,1.0) {$\bullet\; \bullet$};
\end{scope}

\begin{scope}[xshift=12cm,yshift=0cm]
\draw[thick,rounded corners=0pt] (0.0,0.5) rectangle (1.0,1.5);

\node at (0.5,1.0) {$tr$};
\end{scope}

\draw[<-] (3.05,1.0) -- (3.95,1.0);

\node at (3.4,1.2)[anchor=west] { 1};

\draw[<-] (5.05,1.0) -- (5.95,1.0);

\node at (5.4,1.2)[anchor=west] { 1};

\draw[<-] (7.05,1.0) -- (7.95,1.0);

\node at (7.4,1.2)[anchor=west] { 1};

\draw[<-] (9.05,1.0) -- (9.95,1.0);

\node at (9.4,1.2)[anchor=west] { 1};

\draw[<-] (11.05,1.0) -- (11.95,1.0);

\node at (11.4,1.2)[anchor=west] { 1};

\draw[<-] (13.05,1.0) -- (13.95,1.0);

\node at (13.4,1.2)[anchor=west] { 1};
\end{scope}

\begin{scope}[xshift=0cm,yshift=-1.5cm]
\begin{scope}[xshift=2cm,yshift=0cm]
\draw[thick,rounded corners=15pt] (0.0,0.5) rectangle (2.0,1.5);

\node at (-0.2,0.5) {$(b,1)$};
\end{scope}

\begin{scope}[xshift=10cm,yshift=0cm]
\draw[thick,rounded corners=15pt] (0.0,0.5) rectangle (2.0,1.5);

\node at (2.2,0.5) {$(b,2)$};

\node at (1.0,1.0) {$\bullet\;\bullet$};
\end{scope}

\begin{scope}[xshift=6.5cm,yshift=0cm]
\draw[thick,rounded corners=0pt] (0.0,0.5) rectangle (1.0,1.5);

\node at (0.5,1.0) {$tr^{r_1}_1$};
\end{scope}

\draw[->] (4.05,1.0) -- (6.45,1.0);

\node at (5.2,1.2)[anchor=west] { 1};

\draw[->] (7.55,1.0) -- (9.95,1.0);

\node at (8.8,1.2)[anchor=west] { 1};
\end{scope}
\end{tikzpicture}
\end{center}

\noindent The system evolves to

\begin{center}
\begin{tikzpicture}[scale=1.1]
\begin{scope}[xshift=0cm,yshift=0cm]
\begin{scope}[xshift=2cm,yshift=0cm]
\draw[thick,rounded corners=15pt] (0.0,0.5) rectangle (1.0,1.5);

\node at (-0.2,0.5) {$(a,1)$};

\node at (0.5,1.0) {$\bullet$};
\end{scope}

\begin{scope}[xshift=14cm,yshift=0cm]
\draw[thick,rounded corners=15pt] (0.0,0.5) rectangle (1.0,1.5);

\node at (1.2,0.5) {$p$};

\end{scope}

\begin{scope}[xshift=4cm,yshift=0cm]
\draw[thick,rounded corners=0pt] (0.0,0.5) rectangle (1.0,1.5);

\node at (0.5,1.0) {$tr^0$};
\end{scope}

\begin{scope}[xshift=6cm,yshift=0cm]
\draw[thick,rounded corners=15pt] (0.0,0.5) rectangle (1.0,1.5);

\node at (1.0,0.5) {$p^0$};

\node at (0.5,1.0) {$\bullet\; \bullet$};

\end{scope}

\begin{scope}[xshift=8cm,yshift=0cm]
\draw[thick,rounded corners=0pt] (0.0,0.5) rectangle (1.0,1.5);

\node at (0.5,1.0) {$tr^1$};
\end{scope}

\begin{scope}[xshift=10cm,yshift=0cm]
\draw[thick,rounded corners=15pt] (0.0,0.5) rectangle (1.0,1.5);

\node at (1.0,0.5) {$p^1$};

\end{scope}

\begin{scope}[xshift=12cm,yshift=0cm]
\draw[thick,rounded corners=0pt] (0.0,0.5) rectangle (1.0,1.5);

\node at (0.5,1.0) {$tr$};
\end{scope}

\draw[<-] (3.05,1.0) -- (3.95,1.0);

\node at (3.4,1.2)[anchor=west] { 1};

\draw[<-] (5.05,1.0) -- (5.95,1.0);

\node at (5.4,1.2)[anchor=west] { 1};

\draw[<-] (7.05,1.0) -- (7.95,1.0);

\node at (7.4,1.2)[anchor=west] { 1};

\draw[<-] (9.05,1.0) -- (9.95,1.0);

\node at (9.4,1.2)[anchor=west] { 1};

\draw[<-] (11.05,1.0) -- (11.95,1.0);

\node at (11.4,1.2)[anchor=west] { 1};

\draw[<-] (13.05,1.0) -- (13.95,1.0);

\node at (13.4,1.2)[anchor=west] { 1};
\end{scope}

\begin{scope}[xshift=0cm,yshift=-1.5cm]
\begin{scope}[xshift=2cm,yshift=0cm]
\draw[thick,rounded corners=15pt] (0.0,0.5) rectangle (2.0,1.5);

\node at (-0.2,0.5) {$(b,1)$};
\end{scope}

\begin{scope}[xshift=10cm,yshift=0cm]
\draw[thick,rounded corners=15pt] (0.0,0.5) rectangle (2.0,1.5);

\node at (2.2,0.5) {$(b,2)$};

\node at (1.0,1.0) {$\bullet\;\bullet$};
\end{scope}

\begin{scope}[xshift=6.5cm,yshift=0cm]
\draw[thick,rounded corners=0pt] (0.0,0.5) rectangle (1.0,1.5);

\node at (0.5,1.0) {$tr^{r_1}_1$};
\end{scope}

\draw[->] (4.05,1.0) -- (6.45,1.0);

\node at (5.2,1.2)[anchor=west] { 1};

\draw[->] (7.55,1.0) -- (9.95,1.0);

\node at (8.8,1.2)[anchor=west] { 1};
\end{scope}
\end{tikzpicture}
\end{center}

\noindent
The system stops its evolution after reaching the configuration

\begin{center}
\begin{tikzpicture}[scale=1.1]
\begin{scope}[xshift=0cm,yshift=0cm]
\begin{scope}[xshift=2cm,yshift=0cm]
\draw[thick,rounded corners=15pt] (0.0,0.5) rectangle (1.0,1.5);

\node at (-0.2,0.5) {$(a,1)$};

\node at (0.5,1.0) {$\bullet\;\bullet\;\bullet$};
\end{scope}

\begin{scope}[xshift=14cm,yshift=0cm]
\draw[thick,rounded corners=15pt] (0.0,0.5) rectangle (1.0,1.5);

\node at (1.2,0.5) {$p$};

\end{scope}

\begin{scope}[xshift=4cm,yshift=0cm]
\draw[thick,rounded corners=0pt] (0.0,0.5) rectangle (1.0,1.5);

\node at (0.5,1.0) {$tr^0$};
\end{scope}

\begin{scope}[xshift=6cm,yshift=0cm]
\draw[thick,rounded corners=15pt] (0.0,0.5) rectangle (1.0,1.5);

\node at (1.0,0.5) {$p^0$};

\end{scope}

\begin{scope}[xshift=8cm,yshift=0cm]
\draw[thick,rounded corners=0pt] (0.0,0.5) rectangle (1.0,1.5);

\node at (0.5,1.0) {$tr^1$};
\end{scope}

\begin{scope}[xshift=10cm,yshift=0cm]
\draw[thick,rounded corners=15pt] (0.0,0.5) rectangle (1.0,1.5);

\node at (1.0,0.5) {$p^1$};

\end{scope}

\begin{scope}[xshift=12cm,yshift=0cm]
\draw[thick,rounded corners=0pt] (0.0,0.5) rectangle (1.0,1.5);

\node at (0.5,1.0) {$tr$};
\end{scope}

\draw[<-] (3.05,1.0) -- (3.95,1.0);

\node at (3.4,1.2)[anchor=west] { 1};

\draw[<-] (5.05,1.0) -- (5.95,1.0);

\node at (5.4,1.2)[anchor=west] { 1};

\draw[<-] (7.05,1.0) -- (7.95,1.0);

\node at (7.4,1.2)[anchor=west] { 1};

\draw[<-] (9.05,1.0) -- (9.95,1.0);

\node at (9.4,1.2)[anchor=west] { 1};

\draw[<-] (11.05,1.0) -- (11.95,1.0);

\node at (11.4,1.2)[anchor=west] { 1};

\draw[<-] (13.05,1.0) -- (13.95,1.0);

\node at (13.4,1.2)[anchor=west] { 1};
\end{scope}

\begin{scope}[xshift=0cm,yshift=-1.5cm]
\begin{scope}[xshift=2cm,yshift=0cm]
\draw[thick,rounded corners=15pt] (0.0,0.5) rectangle (2.0,1.5);

\node at (-0.2,0.5) {$(b,1)$};
\end{scope}

\begin{scope}[xshift=10cm,yshift=0cm]
\draw[thick,rounded corners=15pt] (0.0,0.5) rectangle (2.0,1.5);

\node at (2.2,0.5) {$(b,2)$};

\node at (1.0,1.0) {$\bullet\;\bullet$};
\end{scope}

\begin{scope}[xshift=6.5cm,yshift=0cm]
\draw[thick,rounded corners=0pt] (0.0,0.5) rectangle (1.0,1.5);

\node at (0.5,1.0) {$tr^{r_1}_1$};
\end{scope}

\draw[->] (4.05,1.0) -- (6.45,1.0);

\node at (5.2,1.2)[anchor=west] { 1};

\draw[->] (7.55,1.0) -- (9.95,1.0);

\node at (8.8,1.2)[anchor=west] { 1};
\end{scope}
\end{tikzpicture}
\end{center}

\noindent We notice that indeed, if we refer only to the
markings of the places from $P$ during the evolution of timed and
untimed Petri nets with localities, the markings are the same.

\end{example}

It is easy to prove that the class of timed Petri net with
localities includes the class of Petri net with localities, since we
can assign $0$ to all values of the function~$D$, namely all
transitions fire instantaneously.

\section{Linking Timed Membrane Systems to Timed Petri Nets}
\label{subsection:relationship}

Following the approach given in \cite{Kleijn06} where membrane systems are
translated into Petri nets with localities, we present a translation of timed
membrane systems into timed Petri nets with localities, and then prove an
operational correspondence between them.

\begin{definition}
\label{definition:translation} Let
$\Pi=(V,H,\mu,w_1,\ldots,w_n,R_1,\ldots,R_n,e)$ be a timed membrane
system. Then the corresponding timed Petri net with localities is
$\mathcal{N}_\Pi=(P,T,W,L,D,M_0)$ with its components defined as
follows:

\begin{itemize}
\item $P=V \times \{1,\ldots,n\}$ - to each object $a$ of
membrane $i$ there corresponds a place $p=(a,i)$;

\item $T=\{tr^r_j \mid r\in R_j, 1\leq j \leq n\}$ - to
each rule $r$ of membrane $j$ corresponds a transition $tr^r_j$;

\item for every place $p=(a,i) \in P$ and every transition $tr=tr^r_j \in T$

\begin{math} W(p,tr)= \left\{
{\begin{tabular}{@{\hspace{0ex}}l@{\hspace{2ex}}l@{\hspace{0ex}}}
    $lhs^r_i(a)$ & if $i=j$\\
    $0$ & otherwise\\
\end{tabular}} \right.
\end{math} and
\begin{math} W(tr,p)= \left\{
{\begin{tabular}{@{\hspace{0ex}}l@{\hspace{2ex}}l@{\hspace{0ex}}}
    $rhs^{r,0}_{i,e(r)}(a)$ & if $i=j$\\
    $rhs^{r,0}_{i,e(r)}((a,out))$ & if $(i,j)\in \mu$\\
    $rhs^{r,0}_{i,e(r)}((a,in_j))$ & if $(j,i)\in \mu$\\
    $0$ & otherwise\\
\end{tabular}} \right.
\end{math}

\item for every place $p = (a, i) \in P$, we have $M_0(p) = w_i(a)$;
\item for every transition $tr^r_j \in T$, we have $L(t)=j$;
\item for every $tr^r_j \in T$, we have $D(tr)=e(r)$.
\end{itemize}
\end{definition}

\begin{example}
We consider a timed membrane system $\Pi=(V,\mu,w_1,w_2,R_1,R_2,e)$,
where
\begin{itemize}
\item $V=\{a,b\}$; \qquad $\mu=[[~]_2]_1$; \qquad
$w_1=ab$; \qquad $w_2=a^2b$;
\item $R_1=\{r_1: b \rightarrow (b,in_2)\}$; \qquad
$R_2=\{r_2: a \rightarrow (a,out)\}$; \qquad
 $e(r_1)=0$, $e(r_2)=2$.
\end{itemize}

Graphically, the initial configuration can be depicted as:

\begin{center}
\begin{tikzpicture}[scale=1.1]
\begin{scope}[xshift=-0.2cm]
\draw[thick,rounded corners=4pt] (1.0,0.5) rectangle (2.8,2.0);

\node at (2.9,0.6) {$1$};

\node at (1.3,1.25) {$ab$};

\begin{scope}[xshift=1.6cm,yshift=-0.2cm]
\draw[thick,rounded corners=4pt] (0.0,1.0) rectangle (1.0,1.75);

\node at (1.1,1.0) {$2$};

\node at (0.5,1.35) {$a^2b$};
\end{scope}

\node at (2,0.2) {$t=0$};
\end{scope}

\end{tikzpicture}
\end{center}

\noindent The corresponding timed Petri net with localities is
$\mathcal{N}=(P,T,W,L,D,M_0)$, where:
\begin{itemize}
\item $P=\{(a,1),(a,2),(b,1),(b,2)\}$; \qquad
$T=\{tr^{r_1}_1,tr^{r_2}_2\}$;

\item $D(tr^{r_1}_1)=0$; \qquad $D(tr^{r_2}_2)=2$; \qquad
$L(tr^{r_1}_1)=1$; \qquad $L(tr^{r_2}_2)=2$;

\item $W((a,1),t^{r_2}_2)=W(tr^{r_2}_2,(a,2))=W((b,1),t^{r_1}_1)=W(tr^{r_1}_1,(b,2))=1$

\item $M_0((a,1))=M_0((b,1))=M_0((b,2))=1$; \qquad $M_0((a,2))=2$; \qquad $M_0(gc)=0$.
\end{itemize}

\noindent Graphically, the system at time unit $gc=0$ can be
represented as

\begin{center}
\begin{tikzpicture}[scale=1.2]
\begin{scope}[xshift=0cm,yshift=0cm]
\begin{scope}[xshift=2cm,yshift=0cm]
\draw[thick,rounded corners=15pt] (0.0,0.5) rectangle (2.0,1.5);

\node at (-0.2,0.5) {$(a,1)$};

\node at (1.0,1.0) {$\bullet$};
\end{scope}

\begin{scope}[xshift=10cm,yshift=0cm]
\draw[thick,rounded corners=15pt] (0.0,0.5) rectangle (2.0,1.5);

\node at (2.2,0.5) {$(a,2)$};

\node at (1.0,1.0) {$\bullet\quad \bullet$};
\end{scope}

\begin{scope}[xshift=6.5cm,yshift=0cm]
\draw[thick,rounded corners=0pt] (0.0,0.5) rectangle (1.0,1.5);

\node at (0.5,1.0) {$tr^{r_2}_2@2$};
\end{scope}

\draw[<-] (4.05,1.0) -- (6.45,1.0);

\node at (5.2,1.2)[anchor=west] { 1};

\draw[<-] (7.55,1.0) -- (9.95,1.0);

\node at (8.8,1.2)[anchor=west] { 1};
\end{scope}

\begin{scope}[xshift=0cm,yshift=-1.5cm]
\begin{scope}[xshift=2cm,yshift=0cm]
\draw[thick,rounded corners=15pt] (0.0,0.5) rectangle (2.0,1.5);

\node at (-0.2,0.5) {$(b,1)$};

\node at (1.0,1.0) {$\bullet$};
\end{scope}

\begin{scope}[xshift=10cm,yshift=0cm]
\draw[thick,rounded corners=15pt] (0.0,0.5) rectangle (2.0,1.5);

\node at (2.2,0.5) {$(b,2)$};

\node at (1.0,1.0) {$\bullet$};
\end{scope}

\begin{scope}[xshift=6.5cm,yshift=0cm]
\draw[thick,rounded corners=0pt] (0.0,0.5) rectangle (1.0,1.5);

\node at (0.5,1.0) {$tr^{r_1}_1@0$};
\end{scope}

\draw[->] (4.05,1.0) -- (6.45,1.0);

\node at (5.2,1.2)[anchor=west] { 1};

\draw[->] (7.55,1.0) -- (9.95,1.0);

\node at (8.8,1.2)[anchor=west] { 1};
\end{scope}
\end{tikzpicture}
\end{center}
\end{example}
According to this translation, $M_C$ denotes the marking
of $\mathcal{N}_\Pi$ corresponding to a configuration $C$ of the
timed membrane system $\Pi$. Moreover, for each multiset $R$ of
applied rules in a timed membrane system, the corresponding multiset
of transitions in timed Petri nets with localities is denoted by
$U_R$. Using these notations, we have the following operational correspondence:

\begin{proposition}\label{proposition:corresp}
$C \stackrel{R}{\Longrightarrow} C'$ if and only if $~M_C[U_R\rangle_{max} M_{C'}$.
\end{proposition}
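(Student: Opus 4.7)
The plan is to prove both directions simultaneously by matching, clause by clause, the three defining conditions $(i)$, $(ii)$, $(iii)$ of $C\stackrel{R}{\Longrightarrow}C'$ against the three conditions $(i)$, $(ii)$, $(iii)$ of $M_C[U_R\rangle_{max}M_{C'}$. The translation in Definition~\ref{definition:translation} sets up a bijection between rules and transitions ($r\in R_j \leftrightarrow tr^r_j$) and between the pairs $(a,i)$ and the places $p=(a,i)$, with $M_C(p)=w_i(a)$. Hence every multiset $R$ of rules yields a canonical multiset $U_R$ of transitions, and conversely.

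First I would record the basic translation identities, which follow directly from the construction of $W$. For each place $p=(a,i)$,
\[
pre(U_R)(p)=\sum_{tr^r_j\in U_R} U_R(tr^r_j)\cdot W(p,tr^r_j)=\sum_{r\in R\cap R_i} R(r)\cdot lhs^r_i(a)=lhs_i(a),
\]
since $W(p,tr^r_j)=0$ unless $i=j$. Similarly, using the three-way case split of $W(tr,p)$ together with the convention that messages $(a,here)$, $(a,out)$, $(a,in_j)$ produced by a rule in membrane $j$ are routed to the places of the parent, child, or same membrane, I get
\[
post^k_{e(r)}(p)=rhs^{\,k}_{i,e(r)}(a)
\]
for each current time $k$, and the same renaming schedule $post^s_j\mapsto post^s_{j-1}$ exactly mirrors the schedule $rhs^s_{i,j}\mapsto rhs^s_{i,j-1}$ used in the membrane semantics.

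With these identities in hand, the three conditions transfer mechanically. Condition $(i)$ becomes $lhs_i(a)\le w_i(a)$ versus $pre(U_R)(p)\le M_C(p)$; since $M_C(p)=w_i(a)$ and $pre(U_R)(p)=lhs_i(a)$, these are literally the same inequality for every $(a,i)$. Condition $(ii)$ on maximality is likewise a term-for-term translation: an extra rule $r\in R_i\setminus R$ with $lhs^r_i+lhs_i\le w_i$ exists in $\Pi$ iff the corresponding transition $tr^r_i$ satisfies $pre(\{tr^r_i\})(p)+pre(U_R)(p)\le M_C(p)$ for all $p$, because the bijection preserves both the left-hand-side multisets and the locality partition. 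Finally, condition $(iii)$ gives $w'_i(a)=w_i(a)-lhs_i(a)+\sum_{s=\max(0,k-m)}^{k} rhs^s_{i,0}(a)$, which, via the identities above and $m=m'$, coincides with $M_{C'}(p)=M_C(p)-pre(U_R)(p)+\sum_{s=\max(0,k-m')}^{k}post^s_0(p)$, while the clock increment $M_{C'}(gc)=M_C(gc)+1$ matches the passage to time $k+1$ in $C'$.

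The main obstacle I expect is the bookkeeping for condition $(iii)$, namely showing that the Petri-net side really does reproduce the accumulated sum $\sum_{s=\max(0,k-m)}^{k} rhs^s_{i,0}$: this requires checking that the delay semantics (the renaming $post^s_j\mapsto post^s_{j-1}$ before the clock advance) is consistent with the membrane-level bookkeeping over the entire history of the evolution, not just a single step. I would handle this cleanly by proving the equivalence by induction on the step index $k$, with the identities on $pre$ and $post$ established in the base step ($k=0$, where all delayed multisets are empty) and the inductive step using the matching renaming conventions on both sides. Once this invariant is established, the equivalence $C\stackrel{R}{\Longrightarrow}C'\iff M_C[U_R\rangle_{max}M_{C'}$ follows in both directions without further computation.
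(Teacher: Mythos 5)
Your proof follows essentially the same route as the paper's: identify $M_C(p)=w_i(a)$, $pre(U_R)(p)=lhs_i(a)$, and $\sum_{s}post^s_{0}(p)=\sum_{s}rhs^s_{i,0}(a)$ via the translation of Definition~\ref{definition:translation}, and transfer the defining conditions of the two transition relations across this correspondence. You are in fact somewhat more thorough than the published proof, which only verifies the marking-update clause $(iii)$ and asserts the identity of the accumulated delayed sums outright, whereas you also check enabledness $(i)$, maximality $(ii)$, and supply the induction on the step index $k$ needed to justify that the $post^s_j$ and $rhs^s_{i,j}$ bookkeeping stay synchronized over the whole run.
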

\begin{proof}[Proof] Let us consider the membrane
configuration $C=(w_1,\ldots,$ $w_n)$. According to Definition
\ref{definition:translation}, we have $M_C(p)=w_i(a)$ for each place
$p=(a,i)$. This is a consequence of the fact that there is a
correspondence between membranes and places, and between the
multiset inside membranes and the marking of the places.
After applying the multiset $R$ of rules in
$C$, we obtain a configuration $C'=(w'_1,\ldots,w'_n)$ where for each
membrane $i$ and each object $a$ we have
$w'_i(a)=w_i(a)-lhs_i(a)+\sum^{k}_{s=max(0,k-m)} rhs^s_{i,0}(a)$. In
the corresponding timed Petri net with localities, starting from the
marking $M_C$ and applying the multiset $U_R$ of transitions, we
obtain a new marking $M'$ where for each place $p$ we get
$M'(p)=M_{C}(p)-pre(U_R)(p)+\sum^k_{s=max(0,k-m')}post^s_{0}(p)$. It
is easy to note that $M_C(p)=w_i(a)$, $pre(U_R)(p)=lhs_i(a)$ and\\
$\sum^k_{s=max(0,k-m')}post^s_{0}(p)=\sum^{k}_{s=max(0,k-m)}
rhs^s_{i,0}(a)$. Therefore, it results that $M'(p)\!=\!w'_i(a)$ and
$M'=M'_C$.
\end{proof}

\section{Conclusion}
\label{section:conclusion}

There exist papers in the field of membrane computing in which the
concept of time is used mainly as timers for objects and
membranes \cite{CompMod09,IJCCC10}, and as execution period for
each rule \cite{Cavaliere05,Cavaliere10}.
The idea of adding time to Petri nets is described in
\cite{Peterson81}: ``addition of timing information might
provide a powerful new feature for Petri nets, but may not be
possible in a manner consistent with the basic philosophy of Petri
nets''. Different ways of incorporating timing information into
Petri nets were proposed by many researchers; specific application
fields represent the inspiration for different proposals of
modelling time. For Petri nets with localities \cite{Kleijn06},
time constrains are added in a way inspired by the coloured Petri nets.

In this paper we prove that adding timing to both membrane
systems and Petri nets with localities does not increase the
expressive power of the corresponding untimed formalisms,
establish a link between these timed formalisms by defining
a relationship between timed formalisms under the assumption of
maximal firing, and prove an operational correspondence between
them. This relationship allows to use the Petri nets
tools to verify certain behavioural properties (reachability,
boundedness, liveness and fairness) of membrane systems.
An attempt to use Petri nets software to simulate timing aspects in membrane
systems is presented in \cite{Profir05}.

As further work, we can mention the use of timed membrane systems to model
some biological systems, while Petri nets tools can be used to analyze and verify
automatically the (timing) behavioural properties of these models.

\medskip

\noindent {\bf Acknowledgements}. The work of Bogdan Aman is
supported by POSDRU/89/1.5/S/49944.

\end{document}